\theoremstyle{plain}
\numberwithin{equation}{section}
\newtheorem{thm}{Theorem}[section]
\newtheorem{lem}[thm]{Lemma}
\newtheorem{cor}[thm]{Corollary}
\newenvironment{exam}[1]%
{\begin{flushleft}\textbf{Example #1}.\enspace}%
{\end{flushleft}}
\newcommand{\complex}{{\mathbb C}}
\newcommand{\real}{{\mathbb R}}
\newcommand{\pscript}{{\mathcal P}}
\newcommand{\rscript}{{\mathcal R}}
\newcommand{\sscript}{{\mathcal S}}
\newcommand{\itre}{\mathop{\mathit{Re}}}
\newcommand{\rmtr}{\mathop{\mathrm{tr}}}
\newcommand{\atilde}{\widetilde{a}}
\newcommand{\cbar}{\overline{c}}
\newcommand{\abar}{\overline{a}}
\newcommand{\ahat}{\widehat{a}}
\newcommand{\xhat}{\widehat{x}}
\newcommand{\omegahat}{\widehat{\omega}}
\newcommand{\gammahat}{\widehat{\gamma}}
\newcommand{\ab}[1]{\left|#1\right|}
\newcommand{\brac}[1]{\left\{#1\right\}}
\newcommand{\paren}[1]{\left(#1\right)}
\newcommand{\sqbrac}[1]{\left[#1\right]}
\newcommand{\sqparen}[1]{{\left[#1\right)}}
\newcommand{\elbows}[1]{{\left\langle#1\right\rangle}}
\newcommand{\ket}[1]{{\left|#1\right>}}
\newcommand{\bra}[1]{{\left<#1\right|}}
\begin{document}

\title{AN ISOMETRIC DYNAMICS\\FOR A CAUSAL SET APPROACH\\TO DISCRETE QUANTUM GRAVITY
}
\author{S. Gudder\\ Department of Mathematics\\
University of Denver\\ Denver, Colorado 80208, U.S.A.\\
sgudder@du.edu
}
\date{}
\maketitle

\begin{abstract}
We consider a covariant causal set approach to discrete quantum gravity. We first review the microscopic picture of this approach. In this picture a universe grows one element at a time and its geometry is determined by a sequence of integers called the shell sequence. We next present the macroscopic picture which is described by a sequential growth process. We introduce a model in which the dynamics is governed by a quantum transition amplitude. The amplitude satisfies a stochastic and unitary condition and the resulting dynamics becomes isometric. We show that the dynamics preserves stochastic states. By ``doubling down'' on the dynamics we obtain a unitary group representation and a natural energy operator. These unitary operators are employed to define canonical position and momentum operators.
\end{abstract}

\section{Microscopic Picture}  % Section 1
We call a finite poset $(x,<)$ a \textit{causet} and interpret $a<b$ in $x$ to mean that $b$ is in the causal future of $a$. If $x$ and $y$ are causets with cardinality $\ab{y}=\ab{x}+1$, then $x$ \textit{produces} $y$ (denoted $x\mapsto y$) if $y$ is obtained from $x$ by adjoining a single maximal element to $x$. If $x\to y$ we call $y$ an \textit{offspring} of $x$. A \textit{labeling} for a causet $x$ is a bijection
\begin{equation*}
\ell\colon x\to\brac{1,2,\ldots ,\ab{x}}
\end{equation*}
such that $a,b\in x$ with $a<b$ implies $\ell (a)<\ell (b)$. We then call $x=(x,\ell )$ a \textit{labeled} causet. A labeling of $x$ corresponds to a ``birth order'' for the elements of $x$. Two labeled causets $x,y$ are \textit{isomorphic} if there is a bijection $\phi\colon x\to y$ such that $a<b$ in $x$ if and only if $\phi (a)<\phi (b)$ in $y$ and $\ell\sqbrac{\phi (a)}=\ell (a)$ for all $a\in x$. A causet is \textit{covariant} if it has a unique labeling (up to isomorphisms) and we call a covariant causet a $c$-\textit{causet}. Covariance corresponds to the properties of a manifold being independent of the coordinate system used to describe it. Denote the set of $c$-causets with cardinality $n$ by $\pscript _n$ and the set of all $c$-causets by $\pscript$. It is shown in \cite{gud142} that any $x\in\pscript$ with $x\ne\emptyset$ has a unique producer in $\pscript$ and precisely two offspring in $\pscript$. It follows that $\ab{\pscript _n}=2^{n-1}$, $n=1,2,\ldots\,$. For more background concerning the causet approach to discrete quantum gravity we refer the reader to \cite{hen09,sor94,sur11}. For more information about $c$-causets the reader can refer to \cite{gud141,gud13,gud142}.

Two elements $a,b\in x$ are \textit{comparable} if $a<b$ or $b<a$. We say that $a$ is a \textit{parent} of $b$ and $b$ is a \textit{child} of $a$ if $a<b$ and there is no $c\in x$ with $a<c<b$. A \textit{path} from $a$ to $b$ in $x$ is a sequence $a_1=a,a_2,\ldots ,a_{n-1},a_n=b$ where $a_i$ is a parent of $a_{i+1}$, $i=1,\ldots ,n-1$. The \textit{height} $h(a)$ of $a\in x$ is the cardinality minus one of a longest path in $x$ that ends with $a$. If there is no such path, we set $h(a)=0$. It is shown in \cite{gud142} that a causet $x$ is covariant if and only if $a,b\in x$ are comparable whenever $h(a)\ne h(b)$.

If $x\in\pscript$, we call the sets
\begin{equation*}
S_j(x)=\brac{a\in x\colon h(a)=j}, j=0,1,2,\ldots
\end{equation*}
\textit{shells} and the sequence of integers $s_j(x)=\ab{S_j(x)}$, $j=0,1,2,\ldots$ is the \textit{shell sequence} for $x$ \cite{gud141}. A $c$-causet is uniquely determined by its shell sequence and we think of $\brac{s_j(x)}$ as describing the ``shape'' or geometry of $x$. The tree $(\pscript ,\to )$ can be thought of as a growth model and an $x\in\pscript _n$ is a possible universe at step (time) $n$. An instantaneous universe $x\in\pscript _n$ grows one element at a time in one of two ways. If $x\in\pscript _n$ has shell sequence $(s_0(x),s_1(x),\ldots ,s_m(x))$, then $x\to x_0$ or $x\to x_1$ where $x_0,x_1$ have shell sequence $(s_0(x),s_1(x),\ldots ,s_m(x)+1)$ and $(s_0(x),s_1(x),\ldots ,s_m(x),1)$, respectively. In this way, we recursively order the $c$-causets in $\pscript$ using the notation $x_{n,j}$, $n=1,2,\ldots$, $j=0,1,2,\ldots ,2^{n-1}-1$, where $n=\ab{x_{n,j}}$. For example, in terms of their shell sequences we have:
\begin{align*}  % overfull 39+
x_{1,0}&=(1),x_{2,0}=(2),x_{2,1}=(1,1), x_{3,0}=(3), x_{3,1}=(2,1),x_{3,2}=(1,2),x_{3,3}=(1,1,1)\\
x_{4,0}&=(4),x_{4,1}=(3,1),x_{4,2}=(2,2),x_{4,3}=(2,1,1),x_{4,4}=(1,3), x_{4,5}=(1,2,1)\\
x_{4,6}&=(1,1,2), x_{4,7}=(1,1,1,1)
\end{align*}

In the microscopic picture, we view a $c$-causet as a framework or scaffolding for a possible universe. The vertices of $x$ represent small cells that can be empty or occupied by a particle. The shell sequence for $x$ gives the geometry of the framework. In \cite{gud141} we have shown how to construct a metric or distance function on $x$. This metric has simple and useful properties. However, the present paper is mainly devoted to the macroscopic picture and the quantum dynamics that can be developed in that picture. Figure~1 illustrates the first four steps of the sequential growth process $(\pscript ,\to )$. Notice that this is a multiverse model in which infinite paths represent the histories of ``completed'' universes \cite{hen09}.

% Figure
\begin{figure}
\includegraphics*{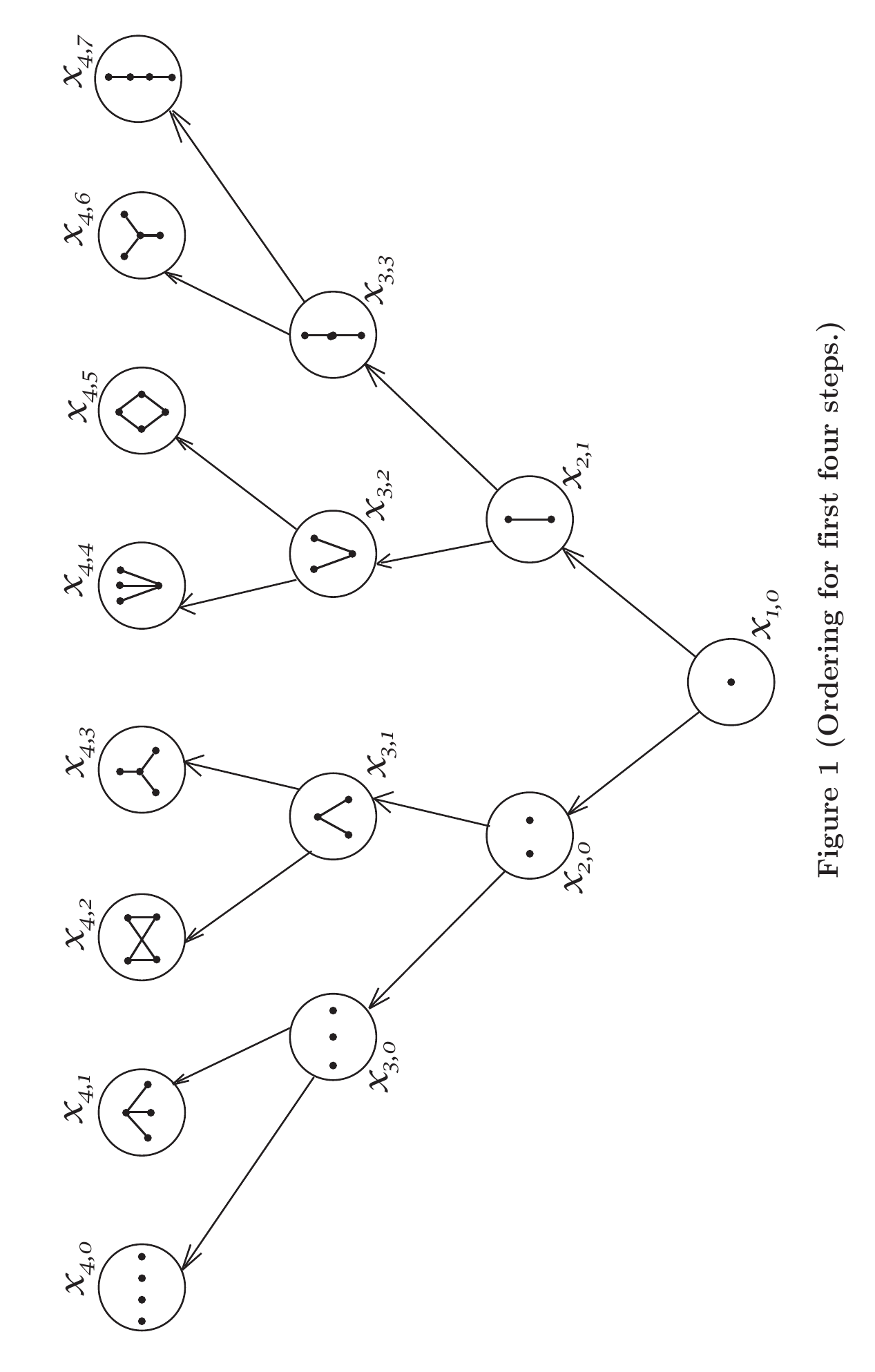}
\end{figure}

\section{Macroscopic Picture} % Section 2
We now study the macroscopic picture which describes the evolution of a universe as a quantum sequential growth process. In such a process, the probabilities and propensities of competing geometries are determined by quantum amplitudes. These amplitudes provide interferences that are characteristic of quantum systems. A \textit{transition amplitude} is a map $\atilde\colon\pscript\times\pscript\to\complex$ satisfying
$\atilde (x,y)=0$ if $x\not\to y$ and $\sum _{y\in\pscript}\atilde (x,y)=1$ for every $x\in\pscript$. Since $x_{n,j}$ only has the offspring $x_{n+1,2j}$ and $x_{n+1,2j+1}$ we have that
\begin{equation}         % equation (2.1)
\label{eq21}
\sum _{k=0}^1\atilde (x_{n,j},x_{n+1,2j+k})=1
\end{equation}
for all $n=1,2,\ldots$, $j=0,1,2,\ldots ,2^{n-1}-1$. We call $\atilde$ a \textit{unitary transition amplitude} (uta) if $\atilde$ also satisfies
$\sum _{y\in\pscript}\ab{\atilde (x,y)}^2=1$ or as in \eqref{eq21} we have
\begin{equation}         % equation (2.2)
\label{eq22}
\sum _{k=0}^1\ab{\atilde (x_{n,j},x_{n+1,2j+k})}^2=1
\end{equation}
One might suspect that these restrictions on a uta are so strong that the possibilities are very limited. This would be true if $\atilde$ were real valued. In this case, $\atilde (x,y)=1$ for one $y$ with $x\to y$ and $\atilde (x,y)=0$, otherwise. However, in the complex case, the next result shows that there are a continuum of possibilities.
\begin{thm}       % Theorem 2.1
\label{thm21}
Two complex numbers $a,b$ satisfy $a+b=\ab{a}^2+\ab{b}^2=1$ if and only if there exists a $\theta\in\sqparen{0,\pi}$ such that
$a=\cos\theta e^{i\theta}$ and $b=-i\sin\theta e^{i\theta}$. Moreover, $\theta$ is unique.
\end{thm}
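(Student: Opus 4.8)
The plan is to dispatch the easy implication by direct substitution and to reduce the forward implication to a single real equation describing a circle of radius $\tfrac12$, after which the stated form emerges from a half-angle rewriting. Since the two conditions share the structure ``sum $=1$'' and ``sum of squared moduli $=1$,'' I expect eliminating $b$ via $b=1-a$ to be the decisive simplification.

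First I would verify the ``if'' direction, which is a one-line computation. Writing $a=\cos\theta e^{i\theta}$ and $b=-i\sin\theta e^{i\theta}$, one gets $a+b=e^{i\theta}\paren{\cos\theta-i\sin\theta}=e^{i\theta}e^{-i\theta}=1$, while $\ab{a}^2+\ab{b}^2=\cos^2\theta+\sin^2\theta=1$ because $\ab{e^{i\theta}}=1$. For the ``only if'' direction the idea is to eliminate $b$. From $a+b=1$ we have $b=1-a$, and substituting into $\ab{a}^2+\ab{b}^2=1$ gives $\ab{a}^2+\ab{1-a}^2=1$. Expanding $\ab{1-a}^2=1-2\itre (a)+\ab{a}^2$ collapses this to $\ab{a}^2=\itre (a)$, i.e.\ $a$ lies on the circle $\ab{a-\tfrac12}=\tfrac12$. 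Hence I may write $a-\tfrac12=\tfrac12 e^{2i\theta}$ for a unique angle $2\theta\in\sqparen{0,2\pi}$, that is, for a unique $\theta\in\sqparen{0,\pi}$.

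The remaining step is to recognize this as the claimed form. Using the half-angle identities $\tfrac12\paren{1+e^{2i\theta}}=\cos\theta\,e^{i\theta}$ and $\tfrac12\paren{1-e^{2i\theta}}=-i\sin\theta\,e^{i\theta}$, I obtain $a=\cos\theta e^{i\theta}$ and $b=1-a=-i\sin\theta e^{i\theta}$, as desired. Uniqueness of $\theta$ is already built in: the point $a$ determines $e^{2i\theta}=2a-1$ on the unit circle, which pins down $2\theta$ modulo $2\pi$ and therefore a single $\theta\in\sqparen{0,\pi}$.

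The computations here are all routine; the only thing to watch is the bookkeeping on the parameter range, namely that letting $2\theta$ run over $\sqparen{0,2\pi}$ sweeps the circle $\ab{a-\tfrac12}=\tfrac12$ exactly once, so that existence and uniqueness hold simultaneously. I therefore anticipate no genuine obstacle, the mild subtlety being to notice that the two conditions together force $a$ onto that circle and to match it with the half-angle rewriting, which conveniently factors out the common $e^{i\theta}$ shared by both $a$ and $b$.
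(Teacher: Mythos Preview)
Your argument is correct and shares the paper's opening move: eliminate $b=1-a$ and reduce to $\ab{a}^2=\itre (a)$. From there the two proofs diverge slightly. The paper writes $a=\ab{a}e^{i\theta}$ in polar form, deduces $\ab{a}=\cos\theta$, treats $a=0$ as a separate case, and appeals to the injectivity of $\cos\theta$ on $\sqparen{0,\pi}$ for uniqueness. You instead recognize $\ab{a}^2=\itre (a)$ as the circle $\ab{a-\tfrac12}=\tfrac12$, parametrize it by $a=\tfrac12(1+e^{2i\theta})$, and read off the half-angle identities. Your route is marginally cleaner: the case $a=0$ is absorbed automatically at $\theta=\pi/2$, and both existence and uniqueness of $\theta\in\sqparen{0,\pi}$ come for free from the bijection $\theta\mapsto e^{2i\theta}$ between $\sqparen{0,\pi}$ and the unit circle. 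The paper's polar-form argument, taken literally, only produces $\theta=\arg(a)\in(-\pi/2,\pi/2)$ and then tacitly identifies this with the desired range, so your bookkeeping is in fact tighter.
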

\begin{proof}
Necessity is clear. For sufficiency, suppose the conditions $a+b=\ab{a}^2+\ab{b}^2=1$ hold. Then
\begin{equation*}
1=\ab{a}^2+\ab{b}^2=\ab{a}^2+\ab{1-a}^2=\ab{a}^2+(1-a)(1-\abar )=1-2\itre a+2\ab{a}^2
\end{equation*}
Hence, $\ab{a}^2=\itre a$. Letting $a=\ab{a}e^{i\theta}$ we have that $\ab{a}^2=\ab{a}\cos\theta$. If $a=0$, the result holds with $\theta =\pi /2$. If
$a\neq 0$, we have that $\ab{a}=\cos\theta$ and $\itre a=\ab{a}\cos\theta$. Hence, $a=\cos\theta e^{i\theta}$ and
\begin{align*}
b&=1-\cos\theta e^{i\theta}=1-\cos ^2\theta -i\cos\theta\sin\theta =\sin\theta (\sin\theta -i\cos\theta )\\
&=-i\sin\theta e^{i\theta}
\end{align*}
Uniqueness follows from the fact that $\cos\theta$ is injective on $\sqparen{0,\pi}$.
\end{proof}

If $\atilde\colon\pscript\times\pscript\to\complex$ is a uta, we call
\begin{equation*}
c_{n,j}^k=\atilde (x_{n,j},x_{n+1,2j+k}),\quad k=0,1
\end{equation*}
the coupling constants for $\atilde$. It follows from Theorem~\ref{thm21} that there exist $\theta _{n,j}\in\sqparen{0,\pi}$ such that
\begin{equation*}
c_{n,j}^0=\cos\theta _{n,j}e^{i\theta _{n,j}},\quad c_{n,j}^1=-i\sin\theta _{n,j}e^{i\theta _{n,j}}
\end{equation*}
It follows that $c_{n,j}^0+c_{n,j}^1=\ab{c_{n,j}^0}^2+\ab{c_{n,j}^1}^2=1$ for all $n=1,2,\ldots$, $j=0,1,2\ldots ,2^{n-1}-1$. Let $H_n$ be the Hilbert space
\begin{equation*}
H_n=L_2(\pscript _n)=\brac{f\colon\pscript _n\to\complex}
\end{equation*}
with the standard inner product
\begin{equation*}
\elbows{f,g}=\sum _{x\in\pscript _n}\overline{f(x)}g(x)
\end{equation*}
A \textit{path} in $\pscript$ is a sequence $\omega =\omega _1\omega _2\cdots$ where $\omega _i\in\pscript _i$ and $\omega _i\to\omega _{i+1}$
Similarly, an $n$-\textit{path} has the form $\omega =\omega _1\omega _2\cdots\omega _n$ where again $\omega _i\in\pscript _i$ and
$\omega _i\to\omega _{i+1}$. We denote the set of paths by $\Omega$ and the set of $n$-paths by $\Omega _n$. Since every $x\in\pscript _n$ has a unique $n$-path terminating at $x$, we can identify $\pscript _n$ with $\Omega _n$ and we write $\pscript\approx\Omega _n$. Similarly, we identify $H_n$ with $L_2 (\Omega _n)$. If $\atilde$ is a uta and $\omega =\omega _1\omega _2\cdots\omega _n\in\Omega _n$, we define the \textit{amplitude} of $\omega$ to be
\begin{equation*}
a(\omega )=\atilde (\omega _1,\omega _2)\atilde (\omega _2,\omega _3)\cdots\atilde (\omega _{n-1},\omega _n)
\end{equation*}
Moreover, we define the \textit{amplitude} of $x\in\pscript _n$ to be $a(\omega )$ where $\omega\in\Omega _n$ terminates at $x$.

Let $\xhat _{n,}$ be the unit vector in $H_n$ given by the characteristic function $\chi _{x_{n,j}}$. Then clearly,
$\brac{\xhat _{n,j}\colon j=0,1,\ldots ,2^{n-1}-1}$ forms an orthonormal basis for $H_n$. Define the operators $U_n\colon H_n\to H_{n+1}$ by
\begin{equation*}
U_n\xhat _{n,j}=\sum _{k=0}^1c_{n,j}^k\xhat _{n+1,2j+k}
\end{equation*}
and extend $U_n$ to $H_n$ by linearity.

\begin{thm}       % Theorem 2.2
\label{thm22}
{\rm{(i)}}\enspace The adjoint of $U_n$ is given by $U_n^*\colon H_{n+1}\to H_n$ where
\begin{equation}         % equation (2.3)
\label{eq23}
U_n^*\xhat _{n+1,2j+k}=\cbar _{n,j}^k\xhat _{n,j},\quad k=0,1
\end{equation}
{\rm{(ii)}}\enspace $U_n$ is a partial isometry with $U_n^*U_n=I_n$ and
\begin{equation}         % equation (2.4)
\label{eq24}
U_nU_n^*=\sum _{j=0}^{2^{n-1}-1}\ket{\sum _{k=0}^1c_{n,j}^k\xhat _{n+1,2j+k}}\bra{\sum _{k=0}^1c_{n,j}^k\xhat _{n+1,2j+k}}
\end{equation}
\end{thm}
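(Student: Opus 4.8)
The plan is to verify every assertion by testing the operators against the orthonormal bases $\brac{\xhat _{n,j}}$ of $H_n$ and $\brac{\xhat _{n+1,m}}$ of $H_{n+1}$, so that the whole argument reduces to bookkeeping with Kronecker deltas and the normalization coming from Theorem~\ref{thm21}. The one combinatorial fact I would isolate at the outset is that the basis index $m\in\brac{0,1,\ldots ,2^n-1}$ of $H_{n+1}$ factors \emph{uniquely} as $m=2j+k$ with $0\le j\le 2^{n-1}-1$ and $k\in\brac{0,1}$; this is just the statement that each $x_{n,j}$ has exactly the two offspring $x_{n+1,2j}$ and $x_{n+1,2j+1}$, and it is precisely what forces the double sums below to collapse to a single surviving term.

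For part (i), I would not compute $U_n^*$ from scratch but instead define a candidate $V\colon H_{n+1}\to H_n$ by the right-hand side of \eqref{eq23}, namely $V\xhat _{n+1,2j+k}=\cbar _{n,j}^k\xhat _{n,j}$, and then check $\elbows{V\xhat _{n+1,2j+k},\xhat _{n,i}}=\elbows{\xhat _{n+1,2j+k},U_n\xhat _{n,i}}$ on basis vectors. Expanding $U_n\xhat _{n,i}=c_{n,i}^0\xhat _{n+1,2i}+c_{n,i}^1\xhat _{n+1,2i+1}$, the right side is $c_{n,j}^k\delta _{ij}$ by the unique factorization above. Since the inner product is conjugate-linear in its first slot, the conjugate there undoes the bar on $\cbar _{n,j}^k$ and the left side equals $c_{n,j}^k\delta _{ij}$ as well, so $V=U_n^*$ by uniqueness of the adjoint. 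Tracking which slot carries the conjugate is the only place care is needed.

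For part (ii), I would first compute $U_n^*U_n\xhat _{n,j}=\sum _{k=0}^1c_{n,j}^k\cbar _{n,j}^k\xhat _{n,j}=\paren{\ab{c_{n,j}^0}^2+\ab{c_{n,j}^1}^2}\xhat _{n,j}=\xhat _{n,j}$, where the final equality is exactly the uta condition \eqref{eq22} guaranteed by Theorem~\ref{thm21}. Hence $U_n^*U_n=I_n$, which shows $U_n$ is an isometry and in particular a partial isometry. For the formula \eqref{eq24}, the cleanest route is to observe that $U_nU_n^*$ is the orthogonal projection onto $\mathrm{ran}\,U_n$ and that $\brac{U_n\xhat _{n,j}}_j$ is an orthonormal family spanning that range (isometries preserve inner products), so the projection equals $\sum _j\ket{U_n\xhat _{n,j}}\bra{U_n\xhat _{n,j}}$, which is precisely \eqref{eq24} since $U_n\xhat _{n,j}=\sum _{k=0}^1c_{n,j}^k\xhat _{n+1,2j+k}$. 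Alternatively one verifies \eqref{eq24} directly on $\xhat _{n+1,2i+l}$: the left side gives $\cbar _{n,i}^lU_n\xhat _{n,i}$, and on the right side orthonormality together with the unique factorization of $2i+l$ annihilates every term except $j=i$, again yielding $\cbar _{n,i}^lU_n\xhat _{n,i}$.

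There is no serious obstacle here; the content is entirely the uta normalization $\ab{c_{n,j}^0}^2+\ab{c_{n,j}^1}^2=1$ together with orthonormality. The only thing that could trip up the write-up is the index bookkeeping—making sure the factorization $m=2j+k$ is used to reduce each double sum to one term, and tracking the complex conjugates through the conjugate-linear first slot of $\elbows{\cdot ,\cdot }$.
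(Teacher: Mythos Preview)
Your proposal is correct and follows essentially the same route as the paper: verify the adjoint formula \eqref{eq23} by checking the defining relation $\elbows{U_n^*\xhat _{n+1,2j+k},\xhat _{n,i}}=\elbows{\xhat _{n+1,2j+k},U_n\xhat _{n,i}}$ on basis vectors, obtain $U_n^*U_n=I_n$ from the uta normalization $\ab{c_{n,j}^0}^2+\ab{c_{n,j}^1}^2=1$, and identify \eqref{eq24} as the projection onto $\rscript(U_n)$ written in terms of the orthonormal image basis $\brac{U_n\xhat _{n,j}}$, with the same direct basis check offered as an alternative. Your explicit remark about the unique factorization $m=2j+k$ and the conjugate-linearity bookkeeping is a nice touch but adds no new mathematical content beyond what the paper already does.
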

\begin{proof}
(i)\enspace To show that \eqref{eq23} holds, we have
\begin{align*}
\elbows{U_n^*\xhat _{n+1,2j'+k'},\xhat _{n,j}}&=\elbows{\xhat _{n+1,2j'+k'},U_n\xhat _{n,j}}\\
  &=\elbows{\xhat _{n+1,2j'+k'},\sum _{k=0}^1c_{n,k}^k\xhat _{n+1,2j+k}}\\
  &=c_{n,j}^k\delta _{jj'}\delta _{kk'}=\elbows{\cbar _{n,j'}^{k'}\xhat _{n,j'}\xhat _{n,j}}
\end{align*}
(ii)\enspace To show that $U_n^*U_n=I_n$ we have by (i) that
\begin{equation*}
U_n^*U_n\xhat _{n,j}=\sum _{k=0}^1c_{n,j}^kU_n^*\xhat _{n+1,2j+k}=\sum _{k=0}^1\ab{c_{n,j}^k}^2\xhat _{n,j}=\xhat _{n,j}
\end{equation*}
Since $\brac{\xhat _{n,j}\colon j=0,1,\ldots ,2^{n-1}-1}$ forms an orthonormal basis for $H_n$, the result follows. Equation~\eqref{eq24} holds because it is well-known that $U_nU_n^*$ is the projection onto the range of $\rscript (U_n)$. We can also show this directly as follows
\begin{align*}
\sum _{j=0}^{2^{n-1}-1}&\ket{\sum _{k=0}^1c_{n,j}^k\xhat _{n+1,2j+k}}\bra{\sum _{k=0}^1c_{n,j}^k\xhat _{n+1,2j+k}}\xhat _{n+1,2j'+k'}\\
  &=\sum _{j=0}^{2^{n-1}-1}\sum _{k=0}^1c_{n,j}^k\xhat _{n+1,2j+k}\cbar_{n,j'}^{k'}\delta _{jj'}=\cbar_{n,j'}^{k'}\sum _{k=0}^1c_{n,j'}^kx_{n+1,2j'+k}\\
  &=\cbar_{n,j'}^{k'}U_n\xhat _{n,j'}=U_nU_n^*\xhat _{n+1,2j'+k'}\qedhere
\end{align*}
\end{proof}

It follows from Theorem~\ref{thm22} that the dynamics $U_n\colon H_n\to H_{n+1}$ for a uta $\atilde$ is an isometric operator. As usual
a \textit{state} on $H_n$ is a positive operator $\rho$ on $H_n$ with $\rmtr (\rho )=1$. A \textit{stochastic state} on $H_n$ is a state $\rho$ that satisfies
$\elbows{\rho 1_n,1_n}=1$ where $1_n=\chi _{\pscript _n}$; that is, $1_n(x)=1$ for every $x\in\pscript _n$. Notice that $U_n^*1_{n+1}=1_n$.

\begin{lem}       % Lemma 2.3
\label{lem23}
{\rm{(i)}}\enspace If $\rho$ is a state on $H_n$, then $U_n\rho U_n^*$ is a state on $H_{n+1}$.
{\rm{(i)}}\enspace If $\rho$ is a stochastic state on $H_n$, then $U_n\rho U_n^*$ is a stochastic state on $H_{n+1}$.
\end{lem}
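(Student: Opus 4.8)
The plan is to lean on the two identities already established just before the lemma: from Theorem~\ref{thm22}(ii) we have $U_n^*U_n=I_n$, and the remark preceding the lemma records $U_n^*1_{n+1}=1_n$. With these in hand, both claims reduce to short formal manipulations, so the work is really in organizing them correctly. For the first claim I would verify the two defining properties of a state—positivity and unit trace—separately. Positivity of $U_n\rho U_n^*$ follows immediately from positivity of $\rho$: for any $\psi\in H_{n+1}$ one writes $\elbows{U_n\rho U_n^*\psi,\psi}=\elbows{\rho(U_n^*\psi),U_n^*\psi}\ge 0$, which is just $\rho$ evaluated on the vector $U_n^*\psi$.

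For the trace condition I would invoke cyclicity of the trace and then collapse $U_n^*U_n$ to the identity:
\begin{equation*}
\rmtr(U_n\rho U_n^*)=\rmtr(\rho U_n^*U_n)=\rmtr(\rho I_n)=\rmtr(\rho)=1.
\end{equation*}
This shows $U_n\rho U_n^*$ is a state and finishes (i).

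For the second claim I would check the stochastic condition $\elbows{(U_n\rho U_n^*)1_{n+1},1_{n+1}}=1$ directly. Pushing $U_n$ across the inner product via its adjoint and then applying $U_n^*1_{n+1}=1_n$ gives
\begin{equation*}
\elbows{U_n\rho U_n^*1_{n+1},1_{n+1}}=\elbows{\rho U_n^*1_{n+1},U_n^*1_{n+1}}=\elbows{\rho 1_n,1_n}=1,
\end{equation*}
the final equality holding because $\rho$ is stochastic on $H_n$. Since (i) already guarantees $U_n\rho U_n^*$ is a state, this upgrades it to a stochastic state.

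The only point that genuinely requires attention—and what I would flag as the main pitfall rather than a true obstacle—is the direction in which the product $U_n^*U_n$ is collapsed in the trace computation. Because $U_n$ is merely a \emph{partial} isometry (Theorem~\ref{thm22}(ii)), we have $U_n^*U_n=I_n$ but $U_nU_n^*\ne I_{n+1}$ in general, the latter being the range projection displayed in \eqref{eq24}. Trace normalization must therefore be obtained by cycling $U_n^*$ around to meet $U_n$ and produce $I_n$; the opposite collapse is unavailable. Everything else is a purely formal consequence of the positivity of $\rho$ together with the two stated identities.
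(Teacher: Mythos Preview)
Your proof is correct and matches the paper's argument essentially line for line: positivity via $\elbows{\rho U_n^*\phi,U_n^*\phi}\ge 0$, unit trace via cyclicity and $U_n^*U_n=I_n$, and the stochastic condition via $U_n^*1_{n+1}=1_n$. Your additional remark flagging that the collapse must produce $U_n^*U_n$ rather than $U_nU_n^*$ is a useful clarification not made explicit in the paper.
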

\begin{proof}
(i)\enspace To show that $U_n\rho _nU_n^*$ is positive, we have
\begin{equation*}
\elbows{U_n\rho U_n^*\phi ,\phi}=\elbows{\rho U_n^*\phi ,U_n^*\phi}\ge 0
\end{equation*}
for all $\phi\in H_{n+1}$. Moreover, by Theorem~\ref{thm22}(ii) we have
\begin{equation*}
\rmtr (U_n\rho U_n^*)=\rmtr (U_n^*U_n\rho )=\rmtr (\rho )=1
\end{equation*}
(ii)\enspace Since $U_n^*1_{n+1}=1_n$ we have
\begin{equation*}
\elbows{U_n\rho U_n^*1_{n+1},1_{n+1}}=\elbows{\rho U_n^*1_{n+1},U_n^*1_{n+1}}=\elbows{\rho 1_n1_n}=1\qedhere
\end{equation*}
\end{proof}

Denoting the time evolution of states by $\rho _n\to\rho _{n+1}$, Lemma~\ref{lem23} shows that $\rho\to U_n\rho U_n^*$ gives a quantum dynamics for states. We now show this explicitly for the transition amplitude. Since
\begin{equation*}
\elbows{\xhat _{n+1,2j+k},U_n\xhat _{n,j}}=c_{n,j}^k=\atilde (x_{n,j},x_{n+1,2j+k})
\end{equation*}
we have for every $\omega =\omega _1\omega _2\cdots\omega _n\in\Omega _n$ that
\begin{equation*}
a(\omega )=\elbows{\omegahat _2,U_1\omegahat _1}\elbows{\omegahat _3,U_2\omegahat _2}
    \cdots\elbows{\omegahat _n,U_{n-1}\omegahat _{n-1}}
\end{equation*}
Define the operator $\rho _n$ on $H_n$ by $\elbows{\omegahat ,\rho _n\omegahat '}=\overline{a(\omega )}a(\omega ')$ where
$\omegahat =\chi _{\brac{\omega}}\in H_n$ for any $\omega\in\Omega _n$.

\begin{thm}       % Theorem 2.4
\label{thm24}
The operator $\rho _n$ is a stochastic state on $H_n$.
\end{thm}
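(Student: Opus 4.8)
The plan is to verify directly the three defining properties of a stochastic state for $\rho _n$: positivity, unit trace, and $\elbows{\rho _n1_n,1_n}=1$. Writing a general $\phi\in H_n$ as $\phi =\sum _{\omega\in\Omega _n}\phi (\omega )\omegahat$ and expanding the defining relation $\elbows{\omegahat ,\rho _n\omegahat '}=\overline{a(\omega )}a(\omega ')$ bilinearly, I would compute
\begin{equation*}
\elbows{\phi ,\rho _n\phi}=\sum _{\omega ,\omega '}\overline{\phi (\omega )}\,\overline{a(\omega )}a(\omega ')\phi (\omega ')=\ab{\sum _{\omega\in\Omega _n}a(\omega )\phi (\omega )}^2\ge 0 ,
\end{equation*}
which gives positivity for free (equivalently, $\rho _n=\ket{\eta}\bra{\eta}$ is rank one, with $\eta =\sum _\omega\overline{a(\omega )}\omegahat$). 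Taking $\phi =\omegahat$ and summing over the orthonormal basis yields $\rmtr (\rho _n)=\sum _{\omega\in\Omega _n}\ab{a(\omega )}^2$, and taking $\phi =1_n$ in the display above gives $\elbows{\rho _n1_n,1_n}=\elbows{1_n,\rho _n1_n}=\ab{\sum _{\omega\in\Omega _n}a(\omega )}^2$. So the whole theorem reduces to the two scalar identities $\sum _{\omega}\ab{a(\omega )}^2=1$ and $\sum _{\omega}a(\omega )=1$.

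The key device that packages both identities is the vector $\phi _n=\sum _{\omega\in\Omega _n}a(\omega )\omegahat$, for which I would prove the recursion $\phi _{n+1}=U_n\phi _n$ with base case $\phi _1=\xhat _{1,0}$. Under the identification $\Omega _n\approx\pscript _n$, an $n$-path $\omega$ terminating at $x_{n,j}$ extends to exactly the two $(n+1)$-paths terminating at $x_{n+1,2j+k}$, whose amplitudes are $a(\omega )c_{n,j}^k$; since $U_n\xhat _{n,j}=\sum _kc_{n,j}^k\xhat _{n+1,2j+k}$, applying $U_n$ to $\phi _n$ and collecting terms reproduces $\sum _{\omega '\in\Omega _{n+1}}a(\omega ')\omegahat '=\phi _{n+1}$. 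The base case holds because the unique $1$-path has amplitude equal to the empty product $1$.

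Granting the recursion, the first identity is the isometry of $U_n$: by Theorem~\ref{thm22}(ii), $\elbows{\phi _{n+1},\phi _{n+1}}=\elbows{U_n^*U_n\phi _n,\phi _n}=\elbows{\phi _n,\phi _n}$, so $\elbows{\phi _n,\phi _n}=\elbows{\phi _1,\phi _1}=1$, i.e. $\sum _\omega\ab{a(\omega )}^2=1$, giving $\rmtr (\rho _n)=1$. The second identity uses the relation $U_n^*1_{n+1}=1_n$ recorded just before the statement: since $\elbows{1_n,\phi _n}=\sum _\omega a(\omega )$, I get $\elbows{1_{n+1},\phi _{n+1}}=\elbows{1_{n+1},U_n\phi _n}=\elbows{U_n^*1_{n+1},\phi _n}=\elbows{1_n,\phi _n}$, and inducting down to $\elbows{1_1,\phi _1}=1$ yields $\sum _\omega a(\omega )=1$, hence $\elbows{\rho _n1_n,1_n}=1$.

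I expect no genuine obstacle here; the content is the single recursion $\phi _{n+1}=U_n\phi _n$, which routes the stochastic normalization \eqref{eq21} of $\atilde$ into the stochastic condition and the unitary normalization \eqref{eq22} into the trace condition. The only real care needed is bookkeeping: matching the conjugations in the definition of $\rho _n$ against the convention $\elbows{f,g}=\sum\overline{f}g$, and checking that path extension is a bijection of $\Omega _n\times\brac{0,1}$ onto $\Omega _{n+1}$ so that in forming $U_n\phi _n$ no amplitude is double-counted or dropped.
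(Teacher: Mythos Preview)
Your proof is correct and follows the same three-part outline as the paper's: positivity via the rank-one computation $\elbows{\phi ,\rho _n\phi}=\ab{\sum _\omega a(\omega )\phi (\omega )}^2$, then the reduction of the trace and stochastic conditions to the scalar identities $\sum _\omega\ab{a(\omega )}^2=1$ and $\sum _\omega a(\omega )=1$. Where the paper establishes these two identities by writing $a(\omega )=\prod _i\elbows{\omegahat _{i+1},U_i\omegahat _i}$ and telescoping the nested sums one layer at a time using \eqref{eq22} and \eqref{eq21} directly, you instead encapsulate the induction in the single recursion $\phi _{n+1}=U_n\phi _n$ and invoke the already-recorded operator facts $U_n^*U_n=I_n$ (Theorem~\ref{thm22}(ii)) and $U_n^*1_{n+1}=1_n$. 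Your vector $\phi _n$ is precisely the amplitude vector $\ahat _n$ that the paper defines \emph{after} this theorem, and your recursion is exactly Theorem~\ref{thm26}(ii); so in effect you are front-loading that result and using it to streamline the present proof. The content is the same --- your route is a little tidier in that it reuses operator identities already on the page rather than re-deriving them by hand, while the paper's explicit telescoping is more self-contained.
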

\begin{proof}
To show that $\rho _n$ is positive we have
\begin{align*}
\elbows{f,\rho _nf}&=\elbows{\sum\elbows{\gammahat _i,f}\gammahat _i,\rho _n\sum\elbows{\gammahat _j,f}\gammahat _j}\\
  &=\sum\overline{\elbows{\gammahat _i,f}}\sum\elbows{\gammahat _j,f}\elbows{\gammahat _i,\rho _n\gammahat _j}\\
  &=\sum\overline{\elbows{\gammahat _i,f}}\sum\elbows{\gammahat _j,f}\overline{a(\gamma _i)}a(\gamma _i)\\
  &=\ab{\sum a(\gamma _i)\elbows{\gammahat _i,f}}^2\ge 0
\end{align*}
To show that $\rho _n$ is a state on $H_n$ we have that
\begin{align*}
\rmtr (\rho _n)&=\sum\elbows{\gammahat _i,\rho _n\gammahat _i}=\sum\overline{a(\gamma _i)}a(\gamma _i)=\sum\ab{a(\gamma _i)}^2\\
  &=\sum _{\omega _2}\sum _{\omega _3}\cdots\sum _{\omega _n}\ab{\elbows{\omegahat _2,U_1\omegahat _1}}^2
  \ab{\elbows{\omegahat _3,U_2\omegahat _2}}^2\cdots\ab{\elbows{\omegahat _n,U_{n-1}\omegahat _{n-1}}}^2\\
  &=\sum _{\omega _2}\sum _{\omega _3}\cdots\sum _{\omega _{n-1}}\ab{\elbows{\omegahat _2,U_1\omegahat _1}}^2\\
  &\hskip 4.5pc\cdots\ab{\elbows{\omegahat _{n-1},U_{n-2}\omegahat _{n-2}}}^2
  \sum _{\omega _n}\ab{\elbows{\omegahat _n,U_{n-1}\omegahat _{n-1}}}^2\\
  &=\sum _{\omega _2}\sum _{\omega _3}\cdots\sum _{\omega _{n-1}}\ab{\elbows{\omegahat _2,U_1\omegahat _1}}^2
  \cdots\ab{\elbows{\omegahat _{n-1},U_{n-2}\omegahat _{n-2}}}^2\\
  &\quad\vdots\\
  &=\sum _{\omega _2}\ab{\elbows{\omegahat _2,U_1\omegahat _1}}^2=1
\end{align*}
Finally, $\rho _n$ is stochastic on $H_n$ because
\begin{align*}
\elbows{1_n,\rho _n1_n}&=\elbows{\sum\gammahat _i,\rho\sum\gammahat _j}=\sum _{i,j}\elbows{\gammahat _i,\rho _n\gammahat _j}\\
  &=\sum _{i,j}\overline{a(\gamma _i)}a(\gamma _j)=\ab{\sum a(\gamma _i)}^2
\end{align*}
As before, we obtain
\begin{align*}
\sum _{\omega\in\Omega _n}a(\omega )&=\sum _{\omega _2}\sum _{\omega _3}\cdots\sum _{\omega _n}
  \elbows{\omegahat _2,U_1\omegahat _1}\elbows{\omegahat _3,U_2\omegahat _2}\cdots\elbows{\omegahat _n,U_{n-1}\omegahat _{n-1}}\\
  &=\sum _{\omega _2}\sum _{\omega _3}\cdots\sum _{\omega _{n-1}}
  \elbows{\omegahat _2,U_1\omegahat _1}\elbows{\omegahat _3,U_2,\omegahat _2}\cdots\elbows{\omegahat _{n-1}U_{n-2}\omegahat _{n-2}}\\
  &\quad\vdots\\
  &=\sum _{\omega _2}\elbows{\omegahat _2,U_1\omegahat _1}=1\qedhere
\end{align*}
\end{proof}

If $\omega =\omega _1\omega _2\cdots\omega _n\in\Omega _n$, we have seen that $\omega _n$ produces two offspring
$\omega _{n,0},\omega _{n,1}\in\pscript _{n+1}$. We call the set
\begin{equation*}
(\omega\to )
  =\brac{\omega _1\omega _2\cdots\omega _n\omega _{n,0},\omega _1\omega _2\cdots\omega _n\omega _{n,1}}\subseteq\Omega _{n+1}
\end{equation*}
the \textit{one-step causal future} of $\omega$. We say that the sequence $\rho _n$ is \textit{consistent} if
\begin{equation*}
\elbows{(\omega\to )^\wedge ,\rho _{n+1}(\omega '\to )^\wedge}=\elbows{\omegahat ,\rho _n\omegahat '}
\end{equation*}
for every $\omega ,\omega '\in\Omega _n$ where $(\omega\to )^\wedge =\chi _{(\omega\to )}$. Consistency is important because it follows that the probabilities and propensities given by the dynamics $\rho _n$ are conserved in time \cite{gud13,gud142}.

\begin{thm}       % Theorem 2.5
\label{thm25}
The sequence $\rho _n$ is consistent.
\end{thm}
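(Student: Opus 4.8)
The plan is to expand both sides of the consistency identity into individual path amplitudes and then collapse the resulting cross terms using the multiplicative structure of $a(\,\cdot\,)$ together with the stochastic normalization \eqref{eq21}. First I would note that the one-step causal future $(\omega\to )$ consists of exactly the two $(n+1)$-paths obtained by appending each of the two offspring of $\omega _n$; writing these as $\omega ^{(k)}=\omega _1\omega _2\cdots\omega _n\omega _{n,k}$ for $k=0,1$, the characteristic function splits as $(\omega\to )^\wedge =\widehat{\omega ^{(0)}}+\widehat{\omega ^{(1)}}$, and similarly for $\omega '$. Since the $\widehat{\omega ^{(k)}}$ are real valued and $\rho _{n+1}$ is linear, sesquilinearity of the inner product turns the left-hand side into $\sum _{k,k'=0}^1\elbows{\widehat{\omega ^{(k)}},\rho _{n+1}\widehat{\omega '^{(k')}}}$, which by the definition of $\rho _{n+1}$ equals $\sum _{k,k'}\overline{a(\omega ^{(k)})}\,a(\omega '^{(k')})$.

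Next I would use that the amplitude of a path is the product of its edge amplitudes, so appending an offspring merely multiplies by a coupling constant: if $\omega _n=x_{n,j}$ then $a(\omega ^{(k)})=a(\omega )c_{n,j}^k$, and likewise $a(\omega '^{(k')})=a(\omega ')c_{n,j'}^{k'}$ with $\omega '_n=x_{n,j'}$. Substituting and factoring the fixed quantities out of the double sum gives $\overline{a(\omega )}\,a(\omega ')\bigl(\sum _{k=0}^1\cbar _{n,j}^k\bigr)\bigl(\sum _{k'=0}^1 c_{n,j'}^{k'}\bigr)$.

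The final step is to collapse the two coupling-constant sums to $1$. Here the key observation is that it is the \emph{stochastic} normalization \eqref{eq21}, namely $\sum _k c_{n,j}^k=1$, that does the work, not the unitary condition \eqref{eq22} used in the trace computation of Theorem~\ref{thm24}. From \eqref{eq21} we get $\sum _{k'}c_{n,j'}^{k'}=1$, and taking complex conjugates gives $\sum _k\cbar _{n,j}^k=\overline{1}=1$ as well, so both parenthesized factors are $1$. The expression reduces to $\overline{a(\omega )}\,a(\omega ')=\elbows{\omegahat ,\rho _n\omegahat '}$, which is precisely the consistency identity.

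The calculation itself is short; the one point worth flagging is exactly which normalization is invoked. Consistency expresses conservation of total amplitude, a linear condition, and so relies on $\sum _k c_{n,j}^k=1$, whereas the trace-one property of $\rho _n$ expressed conservation of total probability, a quadratic condition, and used $\sum _k\ab{c_{n,j}^k}^2=1$. Keeping these two normalizations straight—and correctly conjugating one of the two sums—is the main thing to get right.
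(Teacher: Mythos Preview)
Your proof is correct and is essentially the same computation as the paper's: both expand $(\omega\to)^\wedge$ into its two path summands, use $\elbows{\widehat{\gamma},\rho _{n+1}\widehat{\gamma}'}=\overline{a(\gamma)}a(\gamma ')$, factor via $a(\omega ^{(k)})=a(\omega )c_{n,j}^k$, and collapse the coupling-constant sums with the stochastic normalization \eqref{eq21}. The only cosmetic difference is that you factor the double sum into $\bigl(\sum_k\cbar_{n,j}^k\bigr)\bigl(\sum_{k'}c_{n,j'}^{k'}\bigr)$ all at once, whereas the paper collapses the $\omega'$-sum first and then the $\omega$-sum.
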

\begin{proof}
Let $\omega =\omega _1\omega _2\cdots\omega _n$, $\omega '=\omega '_1,\omega '_2\cdots\omega '_n\in\Omega _n$ and suppose that
$\omega _n\to\omega _{n,0},\omega _{n,1}$ and $\omega '_n\to\omega '_{n,0},\omega '_{n,1}$. We then have
\begin{align*}
&\elbows{(\omega\to )^\wedge,\rho _{n+1}(\omega '\to )^\wedge}\\
  &\quad\quad =\left\langle(\omega\omega _{n,0})^\wedge +(\omega\omega _{n,1})^\wedge ,\rho _{n+1}\right.
  \left.\sqbrac{(\omega '\omega '_{n,0})^\wedge+(\omega '\omega '_{n,1})^\wedge}\right\rangle\\
  &\quad\quad =\elbows{(\omega\omega _{n,0})^\wedge ,\rho _{n+1}(\omega '\omega '_{n,0})^\wedge}
  +\elbows{(\omega\omega _{n,0})^\wedge ,\rho _{n+1}(\omega '\omega '_{n,1})^\wedge}\\
  &\quad\quad\quad +\elbows{(\omega\omega _{n,1})^\wedge ,\rho _{n+1}(\omega '\omega '_{n,0})^\wedge}
  \elbows{(\omega\omega _{n,1})^\wedge ,\rho _{n+1}(\omega '\omega '_{n,1})^\wedge}\\
  &\quad\quad =\overline{a(\omega )}\overline{\atilde (\omega _n\omega _{n,0})}a(\omega ')
  \sqbrac{\atilde (\omega '_n,\omega '_{n,0})+\atilde (\omega '_n,\omega '_{n,1})}\\
  &\quad\quad\quad +\overline{a(\omega )}\overline{\atilde (\omega _n,\omega _{n,1})}a(\omega ')
   \sqbrac{\atilde (\omega '_n,\omega '_{n,0})+\atilde (\omega '_n,\omega '_{n,1})}\\
  &\quad\quad =\overline{a(\omega )}a(\omega ')=\elbows{\omegahat ,\rho _n\omegahat '}\qedhere
\end{align*}
\end{proof}

The $n$-\textit{decoherence functional} is the map $D_n\colon 2^{\Omega _n}\times 2^{\Omega _n}\to\complex$ defined by \cite{hen09,sor94,sur11}
\begin{equation*}
D_n(A,B)=\elbows{\chi _B,\rho _n\chi _A}
\end{equation*}
The functional $D_n(A,B)$ gives a measure of the interference between $A$ and $B$ when the system is in the state $\rho _n$. Clearly 
$D_n(\Omega _n,\Omega _n)=1$, $D_n(A,B)=\overline{D_n(A,B)}$ and $A\mapsto D_n(A,B)$ is a complex measure for every $B\in 2^{\Omega _n}$. It is also well-known that if $A_1,\ldots ,A_n\in 2^{\Omega _n}$, then the matrix with entries $D_n(A_j,A_k)$ is positive semidefinite \cite{sor94}. Notice that
\begin{equation*}
D_n\paren{\brac{\omega},\brac{\omega '}}=\overline{a(\omega )}a(\omega ')
\end{equation*}
for every $\omega ,\omega '\in\Omega _n$ and
\begin{equation*}
D(A,B)=\sum\brac{\overline{a(\omega )}a(\omega ')\colon\omega\in A,\omega '\in B}
\end{equation*}
Since $\rho _n$ is consistent, we have that
\begin{equation*}
D_{n+1}\paren{(A\to ),(B\to )}=D_n(A,B)
\end{equation*}
for every $A,B\in 2^{\Omega _n}$ where $(A\to )=\cup\brac{(\omega\to )\colon\omega\in A}$. The corresponding $q$-\textit{measure}
\cite{gud13,sor94,sor03} is the map $\mu _n\colon 2^\Omega\to\real ^+$ defined by
\begin{equation*}
\mu _n(A)=D_n(A,A)=\elbows{\chi _A,\rho _n\chi _A}
\end{equation*}
It follows that $\mu _n(\Omega _n)=1$ and $\mu _{n+1}\paren{(A\to)}=\mu _n(A)$ for all $A\in 2^{\Omega _n}$. Although $\mu _n$ is not additive, it satisfies the \textit{grade} 2-\textit{additive condition}: if $A,B,C\in 2^{\Omega _n}$ are mutually disjoint then \cite{hen09, sor94,sor03,sur11}
\begin{equation*}
\mu _n(A\cup B\cup C)=\mu _n(A\cup B)+\mu _n(A\cup C)+\mu _n(B\cup C)-\mu _n(A)-\mu _n(B)-\mu _n(C)
\end{equation*}
Since $\mu _n$ is not a measure we do call it a probability but we interpret $\mu _n(A)$ as the quantum propensity for the occurrence of $A$. We have discussed in \cite{gud13,gud142} ways of extending the $\mu _n$s to a $q$-measure $\mu$ on suitable subsets of $\Omega$.

A uta is \textit{completely stationary} (cs) \textit{with parameter} $\theta\in\sqparen{0,\pi}$ if $\theta _{n,j}=\theta$ for all $n,j$. For example, let $\atilde$ be cs with parameter $0$. Then the path $x_{1,0}x_{2,0}x_{3,0}\cdots$ has $q$-measure $1$ and all other paths have $q$-measure $0$. Now consider a general cs uta $\atilde$ with parameter $\theta\in (0,\pi )$. When a path ``turns left'' $\atilde$ has the value $\cos\theta e^{i\theta}$ and when it ``turns right'' $\atilde$ has the value $-i\sin\theta e^{i\theta}$. Hence if $\omega\in\Omega _n$ turns left $\ell$ times and right $r$ times we have
\begin{equation*}
a(\omega )=(\cos\theta )^\ell(-i)^r(\sin\theta )^re^{in\theta}
\end{equation*}
We then have
\begin{equation*}
\mu _n\paren{\brac{\omega}}=\ab{a(\omega )}^2=\ab{\cos\theta}^{2\ell}\ab{\sin\theta}^{2r}
\end{equation*}
Hence, $\lim _{n\to\infty}\mu _n\paren{\brac{\omega}}=0$ and is is natural to define $\mu\paren{\brac{\omega}}=0$.

A vector
\begin{equation*}
v=\sum _{j=0}^{2^{n-1}-1}v_j\xhat _{n,j}=\paren{v_o,v_1,\ldots ,v_{2^{n-1}-1}}\in H_n
\end{equation*}
is called a \textit{stochastic state vector} if $\|v\|=1$ and $\elbows{v,1_n}=1$. we call the vector
\begin{equation*}
\ahat_n=\paren{a(x_{n,o}),a(x_{n,1}),\ldots ,a(x_{n,2^{n-1}-1})}\in H_n
\end{equation*}
an \textit{amplitude vector}. Of course, $\ahat _n$ is a stochastic state vector.

\begin{thm}       % Theorem 2.6
\label{thm26}
{\rm{(i)}}\enspace If $v\in H_n$ is a stochastic state vector, then $U_nv\in H_{n+1}$ is also.
{\rm{(ii)}}\enspace $U_n\ahat _n=\ahat _{n+1}$.
{\rm{(iii)}}\enspace $U_n^*\ahat _{n+1}=\ahat _n$.
\end{thm}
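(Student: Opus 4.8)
The plan is to reduce all three statements to facts already in hand: the isometry identity $U_n^*U_n=I_n$ from Theorem~\ref{thm22}, the observation $U_n^*1_{n+1}=1_n$ recorded just above Lemma~\ref{lem23}, and the multiplicative structure of path amplitudes. The only genuinely new ingredient is a short recursion for the amplitudes, after which each part is essentially a one-line computation.

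First I would dispose of (i), which needs neither the explicit amplitudes nor the recursion. To see that $U_nv$ is a unit vector, I use that $U_n$ is an isometry: $\|U_nv\|^2=\elbows{U_nv,U_nv}=\elbows{U_n^*U_nv,v}=\elbows{v,v}=1$. For the stochastic condition I push $1_{n+1}$ through the adjoint, $\elbows{U_nv,1_{n+1}}=\elbows{v,U_n^*1_{n+1}}=\elbows{v,1_n}=1$, invoking $U_n^*1_{n+1}=1_n$.

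The heart of the argument is (ii). Here the key step is the amplitude recursion
\begin{equation*}
a(x_{n+1,2j+k})=c_{n,j}^k\,a(x_{n,j}),\qquad k=0,1,
\end{equation*}
which holds because $x_{n,j}$ is the unique producer of $x_{n+1,2j+k}$: the unique $(n+1)$-path terminating at $x_{n+1,2j+k}$ is the $n$-path terminating at $x_{n,j}$ extended by the single step $x_{n,j}\to x_{n+1,2j+k}$, so its amplitude is that of the shorter path multiplied by $\atilde(x_{n,j},x_{n+1,2j+k})=c_{n,j}^k$. Granting this, I expand
\begin{align*}
U_n\ahat_n&=\sum_j a(x_{n,j})\,U_n\xhat_{n,j}=\sum_j\sum_{k=0}^1 a(x_{n,j})c_{n,j}^k\,\xhat_{n+1,2j+k}\\
&=\sum_j\sum_{k=0}^1 a(x_{n+1,2j+k})\,\xhat_{n+1,2j+k}.
\end{align*}
As $j$ runs over $0,\ldots,2^{n-1}-1$ and $k$ over $\brac{0,1}$, the index $2j+k$ runs exactly once over $0,\ldots,2^n-1$, so the double sum is precisely $\ahat_{n+1}$. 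The main (and essentially only) obstacle is this bookkeeping recursion; once the unique-producer structure is taken seriously it is routine.

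Finally, (iii) is immediate from (ii) together with the isometry: $U_n^*\ahat_{n+1}=U_n^*U_n\ahat_n=\ahat_n$. I expect no difficulty beyond verifying the reindexing in (ii) and keeping track of the conjugate-linear convention for $\elbows{\cdot,\cdot}$ when moving factors across the adjoint in (i).
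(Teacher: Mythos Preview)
Your proposal is correct and, for parts (i) and (ii), coincides with the paper's proof almost verbatim: the paper also invokes the isometry and $U_n^*1_{n+1}=1_n$ for (i), and expands $U_n\ahat_n$ using the definition of $U_n$ together with the amplitude recursion $a(x_{n+1,2j+k})=c_{n,j}^k\,a(x_{n,j})$ for (ii). For (iii) the paper instead performs a direct computation with the explicit formula $U_n^*\xhat_{n+1,2j+k}=\cbar_{n,j}^k\xhat_{n,j}$ and the identity $\sum_k\ab{c_{n,j}^k}^2=1$; your shortcut $U_n^*\ahat_{n+1}=U_n^*U_n\ahat_n=\ahat_n$ is precisely the alternative the paper records in the remark immediately following the proof.
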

\begin{proof}
(i)\enspace This follows from the fact that $U_n$ is isometric and $U_n^*1_{n+1}=1_n$.\newline
(ii)\enspace This holds because
\begin{align*}
U_n\ahat _n&=U_n\sum _{j=0}^{2^{n-1}-1}\!\!a(x_{n,j})\xhat _{n,j}=\!\!\sum _{j=0}^{2^{n-1}-1}\!\!
  \sqbrac{a(x_{n,j})c_{n,j}^0\xhat _{n+1,2j}+a(x_{n,j})c_{n,j}^1\xhat _{n+1,2j+1}}\\
  &=\sum _{j+0}^{2^n-1}a(x_{n+1,j})\xhat _{n+1,j}=\ahat _{n+1}
\end{align*}
(iii)\enspace This is obtained from
\begin{align*}
U_n^*\ahat _{n+1}&=\sum\sqbrac{a(x_{n+1,2j})\xhat _{n+1,2j}+a(x_{n+1,2j+1})\xhat _{n+1,2j+1}}\\
  &=\sum\sqbrac{a(x_{n+1,2j})\cbar _{n,j}^0\xhat _{n,j}+a(x_{n+1,2j+1})\cbar _{n,j}^1\xhat _{n,j}}\\
  &=\sum\sqbrac{c_{n,j}^0a(x_{n,j})\cbar _{n,j}^0+c_{n,j}^1a(x_{n,j})\cbar _{n,j}^1}\xhat _{n,j}\\
  &=\sum a(x_{n,j})\xhat _{n,j}=\ahat _n\qedhere
\end{align*}
\end{proof}

Actually, (iii) follows from (ii) in Theorem~\ref{thm26} because $\ahat _{n+1}=U_n\ahat _n\in\rscript (U_n)$ so
$U_n^*\ahat _{n+1}=U_n^*U_n\ahat _n=\ahat _n$. Interference in $\pscript _n$ or $\Omega _n$ can be described by the nonadditivity of the
$q$-measure $\mu _n$. We say that $x,y\in\pscript _n$ \textit{do not interfere} if
\begin{equation*}
\mu _n\paren{\brac{x,y}}=\mu _n\paren{\brac{x}}+\mu _n\paren{\brac{y}}
\end{equation*}

The next result gives an application of this concept.

\begin{thm}       % Theorem 2.7
\label{thm27}
If $x,y\in\pscript$ have the same producer, then $x$ and $y$ do not interfere.
\end{thm}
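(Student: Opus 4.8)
The plan is to reduce the non-interference identity to the vanishing of a single cross term, and then to show that this cross term is purely imaginary, the latter being exactly what the unitary condition \eqref{eq22} buys us.

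First I would observe that, since every element of $\pscript$ has a unique producer and precisely two offspring, two \emph{distinct} causets $x,y$ sharing a producer are forced to be the sibling pair $x=x_{n+1,2j}$ and $y=x_{n+1,2j+1}$ with common producer $x_{n,j}$, for some $n,j$. I would then unwind the definition of $\mu_{n+1}$. Writing $\chi_{\brac{x,y}}=\xhat_{n+1,2j}+\xhat_{n+1,2j+1}$ and using $\elbows{\omegahat,\rho_{n+1}\omegahat'}=\overline{a(\omega)}a(\omega')$ together with sesquilinearity of the inner product, the four-term expansion collapses to
\[
\mu_{n+1}\paren{\brac{x,y}}=\mu_{n+1}\paren{\brac{x}}+\mu_{n+1}\paren{\brac{y}}+2\itre\paren{\overline{a(x)}a(y)}.
\]
Thus $x$ and $y$ fail to interfere precisely when $\itre\paren{\overline{a(x)}a(y)}=0$, so the whole theorem is equivalent to the vanishing of this one real part.

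Next I would factor the two amplitudes through their common producer. Using $a(x)=a(x_{n,j})c_{n,j}^0$ and $a(y)=a(x_{n,j})c_{n,j}^1$, I get $\overline{a(x)}a(y)=\ab{a(x_{n,j})}^2\,\cbar_{n,j}^0\,c_{n,j}^1$. Since $\ab{a(x_{n,j})}^2\ge 0$ is a nonnegative real scalar, the problem reduces entirely to showing that $\cbar_{n,j}^0\,c_{n,j}^1$ is purely imaginary. This is the crux, and it is where the unitary normalization does the work. Invoking Theorem~\ref{thm21} with $\theta=\theta_{n,j}$, I would compute
\[
\cbar_{n,j}^0\,c_{n,j}^1=\cos\theta\,e^{-i\theta}\paren{-i\sin\theta\,e^{i\theta}}=-i\cos\theta\sin\theta,
\]
which is manifestly imaginary. (Equivalently, avoiding the angle: from $c_{n,j}^1=1-c_{n,j}^0$ and the relation $\ab{c_{n,j}^0}^2=\itre c_{n,j}^0$ established inside the proof of Theorem~\ref{thm21}, one gets $\cbar_{n,j}^0\,c_{n,j}^1=\cbar_{n,j}^0-\ab{c_{n,j}^0}^2=\cbar_{n,j}^0-\itre c_{n,j}^0$, whose real part is $\itre c_{n,j}^0-\itre c_{n,j}^0=0$.) In either form $\itre\paren{\overline{a(x)}a(y)}=0$, so the siblings do not interfere.

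The genuine obstacle is conceptual rather than computational: recognizing that the entire claim hinges on the orthogonality-type relation $\cbar^0\,c^1\in i\real$, which is exactly the extra rigidity that \eqref{eq22} imposes beyond the mere stochastic constraint \eqref{eq21}. Once that is isolated, the remaining work is a two-line factorization and a one-line evaluation.
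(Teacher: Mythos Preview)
Your proof is correct. Both you and the paper identify $x=x_{n+1,2j}$, $y=x_{n+1,2j+1}$, factor the amplitudes through the common producer $x_{n,j}$, and rely on the two uta conditions $c_{n,j}^0+c_{n,j}^1=1$ and $\ab{c_{n,j}^0}^2+\ab{c_{n,j}^1}^2=1$. The organization, however, is different. You expand $\ab{a(x)+a(y)}^2$ first, isolate the cross term $2\itre(\overline{a(x)}a(y))$, and then kill it by showing $\cbar_{n,j}^0c_{n,j}^1\in i\real$ via the parametrization of Theorem~\ref{thm21}. The paper instead factors \emph{before} expanding: it writes $a(x)+a(y)=a(x_{n,j})(c_{n,j}^0+c_{n,j}^1)=a(x_{n,j})$, so $\mu_{n+1}(\brac{x,y})=\ab{a(x_{n,j})}^2$, and then reinserts $1=\ab{c_{n,j}^0}^2+\ab{c_{n,j}^1}^2$ to split this as $\ab{a(x)}^2+\ab{a(y)}^2$. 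The paper's route is a line shorter and never invokes Theorem~\ref{thm21}; your route makes the vanishing of the interference term explicit, which is arguably more informative. One small remark on your commentary: the relation $\cbar^0c^1\in i\real$ is not purely a consequence of \eqref{eq22}; it needs \eqref{eq21} as well (indeed $\itre(\cbar^0c^1)=0$ is exactly the statement $\ab{c^0+c^1}^2=\ab{c^0}^2+\ab{c^1}^2$, which combines both).
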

\begin{proof}
Suppose $x=x_{n+1,2y},y=x_{n+1,2j+1}$ so $x,y$ have the same producer $x_{n,j}$. Then
\begin{align*}
\mu _{n+1}\paren{\brac{x,y}}&=\ab{a(x)+a(y)}^2=\ab{a(x_{n,j})c_{n,j}^0+a(x_{n,j})c_{n,j}^1}^2\\
&=\ab{a(x_{n,j})}^2\ab{c_{n,j}^0+c_{n,j}^1}^2=\ab{a(x_{n,j})}^2\\
&=\ab{a(x_{n,j})}^2\sqbrac{\ab{c_{n,j}^0}^2+\ab{c_{n,j}^1}^2}\\
&=\ab{a(x_{n,j})c_{n,j}^0}^2+\ab{a(x_{n,j})c_{n,j}^1}^2\\
&=\ab{a(x_{n+1,2j})}^2+\ab{a(x_{n+1,2j+1})}^2=\mu _{n+1}\paren{\brac{x}}+\mu _{n+1}\paren{\brac{y}}
\end{align*}
Hence, $x$ and $y$ do not interfere.
\end{proof}

In general, the noninterference result in Theorem~\ref{thm27} does not hold if $x$ and $y$ have different producers. This is shown in the next two examples.

\begin{exam}{1} % example 1
For simplicity, suppose the uta is cs so we have just two coupling constants $c^0,c^1$. We have seen in Theorem~\ref{thm27} that $x_{3,0}$ and $x_{3,1}$ do not interfere. In a similar way, we see that $x_{3,0}$ and $x_{3,2}$ do not interfere. We also have that $x_{3,1}$ does not interfere with $x_{3,j}$, $j=0,1,2$ and $x_{3,2}$ does not interfere with $x_{3,j}$, $j=0,1,3$. Let us now consider $x_{3,0}$ and $x_{3,3}$. We have that
\begin{align*}
\mu _3\paren{\brac{x_{3,0},x_{3,3}}}&=\ab{a(x_{3,0})+a(x_{3,3})}^2=\ab{(c^0)^2+(c^1)^2}^2\\
  &=\ab{\cos ^2\theta -\sin ^2\theta}=\cos ^22\theta
\end{align*}
On the other hand
\begin{align*}
\mu _3\paren{\brac{x_{3,0}}}+\mu _2\paren{\brac{x_{3,3}}}&=\ab{a(x_{3,0})}^2+\ab{a(x_{3,3}}^2\\
  &=\cos ^4\theta +\sin ^4\theta =\tfrac{1}{2}(1+\cos ^22\theta )
\end{align*}
so $x_{3,0}$ and $x_{3,3}$ interfere, in general.
\end{exam} % example 1

\begin{exam}{2} % example 2
If the uta is not cs, the situation is more complicated and we incur more interference. In the cs case, we saw in Example~1 that $x_{3,0}$ and $x_{3,2}$ do not interfere. However, in this more general case we have
\begin{equation*}
\mu _3\paren{\brac{x_{3,0},x_{3,2}}}=\ab{a(x_{3,0})+a(x_{3,2})}=\ab{c_{1,0}^0c_{2,0}^0+c_{1,0}^1c_{2,1}^0}^2
\end{equation*}
On the other hand
\begin{equation*}
\mu _3\paren{\brac{x_{3,0}}}+\mu _3\paren{\brac{x_{3,2}}}\!=\!\ab{a(x_{3,0})}^2+\ab{a(x_{3,2})}^2
  \!=\!\ab{c_{1,0}^0}^2\ab{c_{2,0}^0}^2+\ab{c_{1,0}^1}^2\ab{c_{2,1}^0}^2
\end{equation*}
But these two quantities do not agree unless
\begin{equation*}
\itre (c_{1,0}^0c_{2,0}^0\cbar _{1,0}^1\cbar _{2,1}^0)=0
\end{equation*}
so $x_{3,0}$ and $x_{3,3}$ interfere, in general.
\end{exam} % example 2

\section{Double-Down To Unitary} % Section 3
We have seen that corresponding to a uta with coupling constants $c_{n,j}^k$ there are isometries $U_n\colon H_n\to H_{n+1}$ that describe the dynamics for a quantum sequential growth process on $(\pscript ,\to )$. The operators $U_n$ cannot be unitary because $H_n$ and $H_{n+1}$ are different dimensional Hilbert spaces. However, we can ``double-down'' the $U_n$ to form operators $V_{n+1}\colon H_{n+1}\to H_{n+1}$ by
\begin{align*}
V_{n+1}\xhat _{n+1,2j}&=c_{n,j}^0\xhat _{n+1,2j}+c_{n,j}^1\xhat _{n+1,2j+1}\\
V_{n+1}\xhat _{n+1,2j+1}&=c_{n,j}^1\xhat _{n+1,2j}+c_{n,j}^0\xhat _{n+1,2j+1}\\
\end{align*}

\begin{thm}       % Theorem 3.1
\label{thm31}
The operators $V_{n+1}$ are unitary and $V_{n+1}1_{n+1}=1_{n+1}$, $n=1,2,\ldots\,$.
\end{thm}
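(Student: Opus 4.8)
The plan is to exploit the block-diagonal structure of $V_{n+1}$. By construction the operator leaves invariant each two-dimensional subspace spanned by $\brac{\xhat_{n+1,2j},\xhat_{n+1,2j+1}}$, $j=0,1,\ldots,2^{n-1}-1$, and on this subspace it acts through the matrix
\[
M_j=\begin{pmatrix} c_{n,j}^0 & c_{n,j}^1\\ c_{n,j}^1 & c_{n,j}^0\end{pmatrix}.
\]
These subspaces are mutually orthogonal and together span $H_{n+1}$, so $V_{n+1}$ is unitary if and only if every $M_j$ is unitary. First I would record this reduction, after which the problem becomes an entirely $2\times 2$ computation repeated for each $j$.

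Next I would check that the two columns of $M_j$ form an orthonormal pair. Each column has squared norm $\ab{c_{n,j}^0}^2+\ab{c_{n,j}^1}^2$, which is $1$ by the uta condition \eqref{eq22}. The Hermitian inner product of the two columns is $\overline{c_{n,j}^0}c_{n,j}^1+\overline{c_{n,j}^1}c_{n,j}^0=2\itre\paren{\overline{c_{n,j}^0}c_{n,j}^1}$, so unitarity of $M_j$ reduces to the single orthogonality claim $\itre\paren{\overline{c_{n,j}^0}c_{n,j}^1}=0$.

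This orthogonality is the crux of the argument, and it is where the special structure of the coupling constants must be used. Setting $c_{n,j}^1=1-c_{n,j}^0$ via \eqref{eq21}, I would compute $\overline{c_{n,j}^0}c_{n,j}^1=\overline{c_{n,j}^0}-\ab{c_{n,j}^0}^2$, whence $\itre\paren{\overline{c_{n,j}^0}c_{n,j}^1}=\itre\paren{c_{n,j}^0}-\ab{c_{n,j}^0}^2$. The identity $\ab{c_{n,j}^0}^2=\itre\paren{c_{n,j}^0}$ was already derived in the proof of Theorem~\ref{thm21}, so this expression vanishes. (Equivalently, substituting $c_{n,j}^0=\cos\theta_{n,j}e^{i\theta_{n,j}}$ and $c_{n,j}^1=-i\sin\theta_{n,j}e^{i\theta_{n,j}}$ yields $\overline{c_{n,j}^0}c_{n,j}^1=-i\cos\theta_{n,j}\sin\theta_{n,j}$, which is purely imaginary.) This establishes that each $M_j$, and therefore $V_{n+1}$, is unitary.

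Finally, for the fixed-vector claim I would decompose $1_{n+1}=\sum_{j=0}^{2^{n-1}-1}\paren{\xhat_{n+1,2j}+\xhat_{n+1,2j+1}}$ and apply $V_{n+1}$ blockwise: on each pair, $V_{n+1}\paren{\xhat_{n+1,2j}+\xhat_{n+1,2j+1}}=\paren{c_{n,j}^0+c_{n,j}^1}\paren{\xhat_{n+1,2j}+\xhat_{n+1,2j+1}}=\xhat_{n+1,2j}+\xhat_{n+1,2j+1}$, using $c_{n,j}^0+c_{n,j}^1=1$ from \eqref{eq21}; summing over $j$ gives $V_{n+1}1_{n+1}=1_{n+1}$. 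I expect the main obstacle to be the off-diagonal orthogonality $\itre\paren{\overline{c_{n,j}^0}c_{n,j}^1}=0$, everything else being bookkeeping with the block structure. It is precisely here that the full uta conditions — not merely the stochastic condition $c_{n,j}^0+c_{n,j}^1=1$ — are indispensable, since the symmetric off-diagonal placement of $c^0$ and $c^1$ would otherwise fail to produce orthogonal columns.
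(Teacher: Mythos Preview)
Your proposal is correct and follows essentially the same approach as the paper: both verify that $V_{n+1}$ carries the orthonormal basis $\{\xhat_{n+1,k}\}$ to an orthonormal set by checking the $2\times 2$ block action, using $\ab{c_{n,j}^0}^2+\ab{c_{n,j}^1}^2=1$ for the norms and $\cbar_{n,j}^0c_{n,j}^1+\cbar_{n,j}^1c_{n,j}^0=0$ for the cross term, and both handle $V_{n+1}1_{n+1}=1_{n+1}$ by the same blockwise use of $c_{n,j}^0+c_{n,j}^1=1$. Your write-up is slightly more explicit than the paper's in justifying the vanishing of the cross term (the paper simply asserts it), but the argument is the same.
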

\begin{proof}
Since $\|V_{n+1}\xhat _{n+1,2j}\|=\|V_{n+1}\xhat _{n+1,2j+1}\|=1$ and
\begin{equation*}
\elbows{V_{n+1}\xhat _{n+1,2j},V_{n+1}\xhat _{n+1,2j+1}}=\cbar _{n,j}^0c_{n,j}^1+\cbar _{n,j}^1c_{n,j}^0=0
\end{equation*}
we conclude that $V_{n+1}$ sends an orthonormal basis to an  orthonormal basis. Hence, $V_{n+1}$ is unitary. To show that $V_{n+1}1_{n+1}=1_{n+1}$ we have
\begin{align*}
V_{n+1}1_{n+1}&=\sum _j(V_{n+1}\xhat _{n+1,2j}+V_{n+1}\xhat _{n+1,2j+1})\\
&=\sum _j\sqbrac{(c_{n,j}^0+c_{n,j}^1)\xhat _{n+1,2j}+(c_{n,j}^1+c_{n,j}^0)\xhat _{n+1,2j+1}}\\
&=\sum _j(\xhat _{n+1,2j}+\xhat _{n+1,2j+1})=1_{n+1}\qedhere
\end{align*}
\end{proof}

The unitary operator $V_2$ corresponds to the coupling constants $c_{1,0}^0$, $c_{1,0}^1$ and relative to the basis $\brac{\xhat _{2,0},\xhat _{2,1}}$ has the form
\begin{equation*}
V_2=
\begin{bmatrix}
  \noalign{\smallskip}c_{1,0}^0&c_{1,0}^1\\\noalign{\smallskip}
  c_{1,0}^1&c_{1,0}^0\\\noalign{\smallskip}
\end{bmatrix}
\end{equation*}
Besides being unitary, $V_2$ is doubly stochastic (row and column sums are one). Of course, this is also true of $V_n$. By Theorem~\ref{thm21}, there exists a unique $\theta\in\sqparen{0,\pi}$ such that $c_{1,0}^0=\cos\theta e^{i\theta}$, $c_{1,0}^1=-i\sin\theta e^{i\theta}$. To make $\theta$ explicit, we write $V_2=V_2(\theta )$.

\begin{lem}       % Lemma 3.2
\label{lem32}
The operator $V_2(\theta )$ has eigenvalues $1,e^{2i\theta}$ with corresponding unit eigenvectors $2^{-1/2}(1,1)$, $2^{-1/2}(1,-1)$.
\end{lem}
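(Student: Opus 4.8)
The plan is to verify directly that the two stated vectors are eigenvectors and to read off the eigenvalues from the action of the matrix. Writing $a=c_{1,0}^0=\cos\theta e^{i\theta}$ and $b=c_{1,0}^1=-i\sin\theta e^{i\theta}$, the matrix $V_2(\theta)$ has $a$ on its diagonal and $b$ off the diagonal. A matrix of this form (equal diagonal entries, equal off-diagonal entries) commutes with the coordinate-swap permutation, so $(1,1)$ and $(1,-1)$ are the natural eigenvector candidates, and the whole proof reduces to applying $V_2(\theta)$ to each.

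First I would apply $V_2(\theta)$ to $(1,1)$, obtaining $(a+b,\,a+b)=(a+b)(1,1)$; hence $(1,1)$ is an eigenvector with eigenvalue $a+b$. By the stochastic condition established in Theorem~\ref{thm21} we have $c_{1,0}^0+c_{1,0}^1=1$, so this eigenvalue equals $1$. This is consistent with Theorem~\ref{thm31}, where $V_{n+1}1_{n+1}=1_{n+1}$ was shown, since $1_2=(1,1)$.

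Next I would apply $V_2(\theta)$ to $(1,-1)$, obtaining $(a-b,\,-(a-b))=(a-b)(1,-1)$, so the eigenvalue is $a-b$. The only computation of substance is the simplification $a-b=\cos\theta e^{i\theta}+i\sin\theta e^{i\theta}=e^{i\theta}(\cos\theta+i\sin\theta)=e^{2i\theta}$ via Euler's formula. Finally, since each candidate vector has norm $\sqrt{2}$, dividing by $\sqrt{2}$ produces the unit eigenvectors $2^{-1/2}(1,1)$ and $2^{-1/2}(1,-1)$.

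There is no genuine obstacle here: the result is a short direct calculation, and the single step requiring attention is the Euler-formula simplification that turns $a-b$ into $e^{2i\theta}$. One point worth a remark is that for $\theta\in(0,\pi)$ the two eigenvalues $1$ and $e^{2i\theta}$ are distinct, so the two eigenvectors are automatically orthogonal, matching the unitarity of $V_2(\theta)$ from Theorem~\ref{thm31}; at $\theta=0$ the eigenvalues coincide and $V_2(0)$ reduces to the identity, for which both vectors remain eigenvectors, so the stated conclusion holds throughout $\sqparen{0,\pi}$.
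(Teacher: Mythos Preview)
Your proposal is correct and follows essentially the same approach as the paper: direct verification that $(1,1)$ and $(1,-1)$ are eigenvectors, together with the stochastic condition $c_{1,0}^0+c_{1,0}^1=1$ and an algebraic simplification of $c_{1,0}^0-c_{1,0}^1$ to $e^{2i\theta}$. The only cosmetic difference is that the paper rewrites $c_{1,0}^0-c_{1,0}^1=2c_{1,0}^0-1$ and expands in real and imaginary parts, whereas you factor out $e^{i\theta}$ and apply Euler's formula directly; both routes are equally valid.
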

\begin{proof}
By direct verification we have
\begin{align*}
V_2\begin{bmatrix}1\\ 1\end{bmatrix}
&=\begin{bmatrix}\noalign{\smallskip}c_{1,0}^0+c_{1,0}^1\\\noalign{\smallskip} c_{1,0}^1+c_{1,0}^0\\\noalign{\smallskip}\end{bmatrix}
=\begin{bmatrix}1\\ 1\end{bmatrix}\\\noalign{\medskip} 
V_2\begin{bmatrix}1\\ -1\end{bmatrix}
&=\begin{bmatrix}\noalign{\smallskip}c_{1,0}^0-c_{1,0}^1\\\noalign{\smallskip} c_{1,0}^1-c_{1,0}^0\\\noalign{\smallskip}\end{bmatrix}
=(c_{1,0}^0-c_{1,0}^1)\begin{bmatrix}1\\ -1\end{bmatrix}
\end{align*}
But
\begin{align*}
c_{1,0}^0-c_{1,0}^1&=c_{1,0}^0-(1-c_{1,0}^0)=2c_{1,0}^0-1=2\cos\theta e^{i\theta}-1\\
&=2\cos ^2\theta +2i\cos\theta\sin\theta -1=\cos ^2\theta -\sin ^2\theta +i\sin 2\theta\\
&=\cos 2\theta +i\sin 2\theta =e^{2i\theta}\qedhere
\end{align*}
\end{proof}

We can write the $2^n$-dimensional Hilbert space $H_{n+1}$ as
\begin{equation*}
H_{n+1}=H_2\oplus H_2\oplus\cdots\oplus H_2
\end{equation*}
where there are $2^{n-1}$ summands and the $j$th summand has the basis$\brac{\xhat _{n+1,2j},\xhat _{n+1,2j+1}}$. In general, $V_{n+1}$ has the form
\begin{equation*}
V_{n+1}(\theta _1,\theta _2,\ldots ,\theta _{2^{n-1}})=V_2(\theta _1)\oplus V_2(\theta _2)\oplus\cdots\oplus V(\theta _{2^{n-1}})
\end{equation*}
It follows from Lemma~\ref{lem32} that $V_{n+1}(\theta _1,\theta _2,\ldots ,\theta _{2^{n-1}})$ has eigenvalues $1$ (with multiplicity $2^{n-1}$) and
$e^{2i\theta _1},e^{2i\theta _2},\ldots ,e^{2i\theta _{2^{n-1}}}$. The unit eigenvectors corresponding to $1$ are
\begin{equation*}
2^{-1/2}(\xhat _{n+1,2j}+\xhat _{n+1,2j+1}),\quad j=0,1,\ldots ,2^{n-1}-1
\end{equation*}
and the unit eigenvector corresponding to $e^{2i\theta j}$ is
\begin{equation*}
2^{-1/2}(\xhat _{n+1,2j}-\xhat _{n+1,2j+1})
\end{equation*}
Let $\sscript (H_{n+1})$ be the set of operators on $H_{n+1}$ of the form
\begin{equation*}
\sscript (H_{n+1})=\brac{V_{n+1}(\theta _1,\theta _2,\ldots ,\theta _{2^{n-1}})\colon\theta _n\in\sqparen{0,\pi}}
\end{equation*}
Now $\sqparen{0,\pi}$ forms an abelian group with operations $a\oplus b=a+b\pmod\pi$.

\begin{lem}       % Lemma 3.3
\label{thm33}
For $\theta _1,\theta _2\in\sqparen{0,\pi}$ we have $V_2(\theta _1)V_2(\theta _2)=V_2(\theta _1+\theta _2)$.
\end{lem}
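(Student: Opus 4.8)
The plan is to exploit the striking feature of Lemma~\ref{lem32} that the eigenvectors $e_+=2^{-1/2}(1,1)$ and $e_-=2^{-1/2}(1,-1)$ are the \emph{same} for every value of $\theta$. Thus the entire family $\brac{V_2(\theta)\colon\theta\in\sqparen{0,\pi}}$ is simultaneously diagonalized in the fixed orthonormal basis $\brac{e_+,e_-}$, with $V_2(\theta)e_+=e_+$ and $V_2(\theta)e_-=e^{2i\theta}e_-$. Because these operators share an eigenbasis, the product acts diagonally in the same basis and is computed simply by multiplying eigenvalues: $V_2(\theta_1)V_2(\theta_2)$ fixes $e_+$ and sends $e_-$ to $e^{2i\theta_1}e^{2i\theta_2}e_-=e^{2i(\theta_1+\theta_2)}e_-$.

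First I would record that diagonal action. Second, I would note that $V_2(\theta_1+\theta_2)$, with $\theta_1+\theta_2$ read in $\sqparen{0,\pi}$ under the group operation, likewise fixes $e_+$ and multiplies $e_-$ by $e^{2i(\theta_1+\theta_2)}$, again by Lemma~\ref{lem32}. Since the two operators agree on the basis $\brac{e_+,e_-}$, they are equal, which is the assertion.

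The one point demanding care, and the main obstacle, is the reduction modulo $\pi$ hidden in the operation $a\oplus b=a+b\pmod\pi$: I must check that $V_2$ is genuinely $\pi$-periodic so that the right-hand side $V_2(\theta_1+\theta_2)$ is well defined and the eigenvalue match is legitimate. This holds because $e^{2i(\theta+\pi)}=e^{2i\theta}e^{2\pi i}=e^{2i\theta}$; equivalently, the coupling constants $c^0=\cos\theta\,e^{i\theta}$ and $c^1=-i\sin\theta\,e^{i\theta}$ are each invariant under $\theta\mapsto\theta+\pi$, since $\cos\theta$, $\sin\theta$, and $e^{i\theta}$ all change sign together. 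Hence $V_2\bigl((\theta_1+\theta_2)\bmod\pi\bigr)=V_2(\theta_1+\theta_2)$ for the ordinary sum, and the eigenvalue computation closes the argument.

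A purely computational alternative would bypass Lemma~\ref{lem32} altogether: multiply the two $2\times2$ matrices directly, observe that the product again has equal diagonal entries $d^0=c_1^0c_2^0+c_1^1c_2^1$ and equal off-diagonal entries $d^1=c_1^0c_2^1+c_1^1c_2^0$, and then apply the angle-addition formulas to identify $d^0=\cos(\theta_1+\theta_2)e^{i(\theta_1+\theta_2)}$ and $d^1=-i\sin(\theta_1+\theta_2)e^{i(\theta_1+\theta_2)}$. Either route reduces the lemma to the elementary identities for the cosine and sine of a sum and exhibits $\theta\mapsto V_2(\theta)$ as a homomorphism from $(\sqparen{0,\pi},\oplus)$ into $\sscript(H_2)$.
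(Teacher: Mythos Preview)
Your proposal is correct and follows essentially the same route as the paper: both invoke Lemma~\ref{lem32} to simultaneously diagonalize the family $\brac{V_2(\theta)}$ in the fixed basis $\brac{e_+,e_-}$, multiply the eigenvalues, and then handle the reduction modulo~$\pi$. The paper splits into the two cases $\theta_1+\theta_2<\pi$ and $\theta_1+\theta_2\ge\pi$ explicitly, whereas you phrase the same check as $\pi$-periodicity of $e^{2i\theta}$ (equivalently of the coupling constants); these are the same observation.
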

\begin{proof}
Since $V_2(\theta _1)$ and $V_2(\theta _2)$ have the same eigenvectors, they commute and can be simultaneously diagonalized as
\begin{equation*}
V_2(\theta _1)=
\begin{bmatrix}
  \noalign{\smallskip}1&0\\\noalign{\smallskip}
  0&e^{2i\theta _1}\\\noalign{\smallskip}
\end{bmatrix}\quad
V_2(\theta _2)=
\begin{bmatrix}
  \noalign{\smallskip}1&0\\\noalign{\smallskip}
  0&e^{2i\theta _2}\\\noalign{\smallskip}
\end{bmatrix}
\end{equation*}
Hence, if $\theta _1+\theta _2<\pi$ then
\begin{equation*}
V_2(\theta _1)V_2(\theta _2)=
\begin{bmatrix}
  \noalign{\smallskip}1&0\\\noalign{\smallskip}
  0&e^{2i(\theta _1+\theta _2)}\\\noalign{\smallskip}
\end{bmatrix}=V_2(\theta _1\oplus\theta _2)
\end{equation*}
and if $\theta _1+\theta _2\ge\pi$ then $\theta _1\oplus\theta _2=\theta _1+\theta _2-\pi$ and we have
\begin{equation*}
V_2(\theta _1)V_2(\theta _2)=
\begin{bmatrix}
  \noalign{\smallskip}1&0\\\noalign{\smallskip}
  0&e^{2i(\theta _1+\theta _2-\pi)}\\\noalign{\smallskip}
\end{bmatrix}=V_2(\theta _1+ \theta _2-\pi)=V(\theta _1\oplus\theta _2)\qedhere
\end{equation*}
\end{proof}

We now form the product group $\sqparen{0,\pi}^{2^{n-1}}=\sqparen{0,\pi}\times\cdots\times\sqparen{0,\pi}$ to obtain the following result.

\begin{cor}       % Corollary 3.4
\label{thm34}
Under operator multiplication, $\sscript (H_{n+1})$ is an abelian group and
$(\theta _1,\ldots ,\theta _{2^{n-1}})\mapsto V_{n+1}(\theta _1,\ldots ,\theta _{2^{n-1}})$ is a unitary representation of the group
$\sqparen{0,\pi}^{2^{n-1}}$.
\end{cor}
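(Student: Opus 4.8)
The plan is to deduce everything from the block-diagonal structure
\[
V_{n+1}(\theta _1,\ldots ,\theta _{2^{n-1}})=V_2(\theta _1)\oplus V_2(\theta _2)\oplus\cdots\oplus V_2(\theta _{2^{n-1}})
\]
together with Lemma~\ref{thm33}. First I would record the group data on the domain: the operation on $\sqparen{0,\pi}^{2^{n-1}}$ is componentwise addition mod $\pi$, with identity $(0,\ldots ,0)$ and with the inverse of $(\theta _1,\ldots ,\theta _{2^{n-1}})$ being $(\ominus\theta _1,\ldots ,\ominus\theta _{2^{n-1}})$, where $\ominus\theta$ denotes the inverse of $\theta$ in the abelian group $\sqparen{0,\pi}$.

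The central step is the homomorphism property. Since operators that are block diagonal with respect to the fixed orthogonal decomposition $H_{n+1}=H_2\oplus\cdots\oplus H_2$ multiply summand by summand, I would compute
\[
V_{n+1}(\theta _1,\ldots )V_{n+1}(\phi _1,\ldots )=\bigoplus _j V_2(\theta _j)V_2(\phi _j)=\bigoplus _j V_2(\theta _j\oplus\phi _j)=V_{n+1}(\theta _1\oplus\phi _1,\ldots ),
\]
where the middle equality is exactly Lemma~\ref{thm33} applied in each of the $2^{n-1}$ blocks. This shows that the map carries the product-group operation on $\sqparen{0,\pi}^{2^{n-1}}$ to operator multiplication.

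Next I would identify the identity and invoke unitarity. By Lemma~\ref{lem32} the operator $V_2(0)$ has both eigenvalues equal to $1$, so $V_2(0)=I_2$ and hence $V_{n+1}(0,\ldots ,0)=I_{n+1}$; the homomorphism identity above then automatically sends inverses to inverses. By Theorem~\ref{thm31} every $V_{n+1}(\theta _1,\ldots )$ is unitary, so the image $\sscript (H_{n+1})$ lies in the unitary group of $H_{n+1}$ and, by the two displays, is closed under products and inverses and contains $I_{n+1}$. Therefore $\sscript (H_{n+1})$ is a subgroup of the unitaries; it is abelian because its preimage $\sqparen{0,\pi}^{2^{n-1}}$ is abelian, and the homomorphism property is precisely the statement that $(\theta _1,\ldots ,\theta _{2^{n-1}})\mapsto V_{n+1}(\theta _1,\ldots ,\theta _{2^{n-1}})$ is a unitary representation. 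One can note in passing that the representation is faithful, since $V_{n+1}(\vec\theta )=V_{n+1}(\vec\phi )$ forces $e^{2i\theta _j}=e^{2i\phi _j}$ and hence $\theta _j=\phi _j$ on $\sqparen{0,\pi}$, giving $\sscript (H_{n+1})\cong\sqparen{0,\pi}^{2^{n-1}}$.

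I do not expect a genuine obstacle here: once Lemma~\ref{thm33} is in hand the corollary is essentially bookkeeping. The only point that needs care is the blockwise multiplication identity, namely that the decomposition into the summands spanned by $\brac{\xhat _{n+1,2j},\xhat _{n+1,2j+1}}$ is the same for both factors and is preserved under the product, so that the computation really does reduce to one application of Lemma~\ref{thm33} per summand rather than producing off-diagonal cross terms.
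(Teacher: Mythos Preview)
Your proposal is correct and is precisely the argument the paper has in mind: the corollary is stated without proof, as an immediate consequence of Lemma~\ref{thm33} applied blockwise to the direct-sum decomposition $V_{n+1}(\theta_1,\ldots)=V_2(\theta_1)\oplus\cdots\oplus V_2(\theta_{2^{n-1}})$ together with the product-group structure on $\sqparen{0,\pi}^{2^{n-1}}$. Your write-up simply spells out the bookkeeping (identity, inverses, unitarity via Theorem~\ref{thm31}) that the paper leaves implicit, and the optional faithfulness remark is a harmless addition.
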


Since $V_{n+1}$ is unitary, there exists a unique self-adjoint operator $K_{n+1}$ on $H_{n+1}$ such that $V_{n+1}=e^{iK_{n+1}}$. We call $K_{n+1}$ a \textit{Hamiltonian operator}. For $V_{n+1}(\theta _1,\ldots ,\theta _{2^{n-1}}$ the eigenvalues of $K_{n+1}$ are $0$ (with multiplicity $2^{n-1}$) and $2\theta _1,\ldots ,2\theta _{2^{n-1}}$. Hence, $\theta _j=2^{-1}\lambda _j$ where $\lambda _j$ is the $j$th energy value, $j=1,\ldots ,2^{n-1}$. This gives a physical significance for the angles $\theta _j$. The corresponding eigenvectors are the same as those given for $V_{n+1}$.

It is natural to define the \textit{position operator} $Q_{n+1}$ on $H_{n+1}$ by $Q_{n+1}f(\xhat _{n+1,k})=k$. Thus, $Q_{n+1}\xhat _{n+1,2j}=2j$ and
$Q_{n+1}\xhat _{n+1,2j+1}=2j+1$. Since $Q_{n+1}$ is diagonal, we immediately see that its eigenvalues are $0,1,\ldots ,2^n-1$ with corresponding eigenvector $\xhat _{n+1,k}$. It also seems natural to define the \textit{canonical momentum operator} $P_{n+1}$ on the subspace generated by
$\brac{\xhat _{n+1,2j},\xhat _{n+1,2j+1}}$ as
\begin{align*}
P_2(\theta _j)&=V_2(\theta _j)^*Q_2(\theta _j)V_2(\theta _j)\\
&=\begin{bmatrix}\noalign{\smallskip}\cbar _{n,j}^0&\cbar _{n,j}^1\\\noalign{\smallskip}\cbar _{n,j}^1&\cbar _{n,j}^0\\\noalign{\smallskip}\end{bmatrix}
\quad\begin{bmatrix}\noalign{\smallskip}2j&0\\\noalign{\smallskip}0&2j+1\\\noalign{\smallskip}\end{bmatrix}
\quad\begin{bmatrix}\noalign{\smallskip}c_{n,j}^0&c_{n,j}^1\\\noalign{\smallskip}c_{n,j}^1&c_{n,j}^0\\\noalign{\smallskip}\end{bmatrix}\\
&=\begin{bmatrix}\noalign{\smallskip}2j+\ab{c_{n,j}^1}^2&c_{n,j}^0\cbar _{n,j}^1\\\noalign{\smallskip}
  \cbar _{n,j}^0c_{n,j}^1&2j+\ab{c_{n,j}^0}^2\\\noalign{\smallskip}\end{bmatrix}
  =\begin{bmatrix}\noalign{\smallskip}2j+\sin ^2\theta _{n,j}&\tfrac{i}{2}\sin 2\theta _{n,j}\\\noalign{\smallskip}
 -\tfrac{i}{2}\sin 2\theta _{n,j}&2j+\cos ^2\theta _{n,j}\\\noalign{\smallskip}\end{bmatrix}\\
\end{align*}
The eigenvalues of $P_2(\theta _j)$ are $2j$ and $2j+1$ with corresponding unit eigenvectors
\begin{align*}
V_2(\theta _j)^*\begin{bmatrix}1\\ 0\end{bmatrix}
&=\begin{bmatrix}\noalign{\smallskip}\cbar _{n,j}^0 \\\cbar _{n,j}^1\\\noalign{\smallskip}\end{bmatrix}\\\noalign{\medskip}
V_2(\theta _j)^*\begin{bmatrix}0\\ 1\end{bmatrix}
&=\begin{bmatrix}\noalign{\smallskip}\cbar _{n,j}^1 \\\cbar _{n,j}^0\\\noalign{\smallskip}\end{bmatrix}
\end{align*}
The complete momentum operator $P_{n+1}$ is given by
\begin{equation*}
P_{n+1}(\theta _1,\ldots ,\theta _{2^{n-1}})=P_2(\theta _1)\oplus P_2(\theta _2)\oplus\cdots\oplus P_2(\theta _{2{n-1}})
\end{equation*}

We now compute the commutator
\begin{align*}
\sqbrac{P_2(\theta _j),Q_2(\theta _j)}&=P_2(\theta _j)Q_2(\theta _j)-Q_2(\theta _j)P_2(\theta _n)=c_{n,j}^0\cbar _{n,j}^1
  \begin{bmatrix}\noalign{\smallskip}0&1\\\noalign{\smallskip}1&0\\\noalign{\smallskip}\end{bmatrix}\\
  &=\tfrac{i}{2}\sin 2\theta _j  \begin{bmatrix}\noalign{\smallskip}0&1\\\noalign{\smallskip}1&0\\\noalign{\smallskip}\end{bmatrix}
\end{align*}
The complete commutation relation is
\begin{align*}
&\sqbrac{P_{n+1}(\theta _1,\ldots ,\theta _{2^{n-1}}),Q_{n+1}(\theta _1,\ldots ,\theta _{2^{n-1}})}\\
  &\hskip 4pc =\sqbrac{P_2(\theta _1),Q_2(\theta _1)}\oplus\cdots\oplus\sqbrac{P_2(\theta _{2^{n-1}})Q_2(\theta _{2^{n-1}})}
\end{align*}
As in the Heisenberg uncertainty relation, the number $\ab{\elbows{\phi ,\sqbrac{P_{n+1},Q_{n+1}}\phi}}$ gives a lower bound for the product of the variances of $P_{n+1}$ and $Q_{n+1}$. We now compute this number for an amplitude state $\ahat _{n+1}$. We have that
\begin{align*}
&\elbows{\ahat _{n+1},\sqbrac{P_{n+1},Q_{n+1}}\ahat _{n+1}}\\
&\hskip 3pc =\sum _j\left\langle\begin{bmatrix}\noalign{\smallskip}a(x_{n+1,2j})\\a(x_{n+1,2j+1})\\\noalign{\smallskip}\end{bmatrix},\right.
\left. c_{n,j}^0c_{n,j}^1\begin{bmatrix}\noalign{\smallskip}0&1\\\noalign{\smallskip}1&0\\\noalign{\smallskip}\end{bmatrix}\ \right.
\left. \begin{bmatrix}\noalign{\smallskip}a(x_{n+1,2j})\\a(x_{n+1,2j+1})\\\noalign{\smallskip}\end{bmatrix}\right\rangle\\\noalign{\medskip}
&\hskip 3pc =\sum _jc_{n,j}^0\cbar _{n,j}^1\sqbrac{\abar (x_{n+1,2j})a(x_{n+1,2j+1})+\abar (x_{n+1,2j+1})a(x_{n+1,2j})}\\
&\hskip 3pc =\sum _jc_{n,j}^0\cbar _{n,j}^1\ab{a(x_{n,j})}^2\sqbrac{\cbar _{n,}^0c_{n,j}^1+\cbar _{n,j}^1c_{n,j}^0}=0
\end{align*}
This shows that even though $P_{n+1}$ and $Q_{n+1}$ do not commute, there is no lower bound for the product of their variances when the system is in an amplitude state.

\end{document}